%
\documentclass[runningheads]{llncs}
\usepackage[T1]{fontenc}

\usepackage{graphicx}

\usepackage{amsmath, amssymb, amsthm} 
\usepackage[colorlinks=true, linkcolor=black, citecolor=red, urlcolor=blue]{hyperref}

\usepackage{multicol} 
\usepackage{url} 
\def\L{\mathcal{L}}
\usepackage{tikz}
\usetikzlibrary{shapes.geometric, arrows, fit, backgrounds}

\tikzstyle{startstop} = [rectangle, rounded corners, minimum width=3cm, minimum height=1cm, text centered, draw=black, fill=red!30]
\tikzstyle{process} = [rectangle, minimum width=3cm, minimum height=1cm, text centered, draw=black, fill=blue!30]
\tikzstyle{arrow} = [thick,->,>=stealth]
\tikzstyle{line} = [thick,-]
\tikzstyle{background} = [draw=black, fill=yellow!20, rounded corners, inner sep=0.5cm]

\usepackage{pdfrender,xcolor}
\usepackage{graphicx}
\begin{document}
\title{Which Consciousness Can Be Artificialized? Local Percept-Perceiver Phenomenon for the Existence of Machine Consciousness}
\titlerunning{Existence of Set-Theoretic Epistemic Consciousness}

\author{Shri Lal Raghudev Ram Singh}
\authorrunning{S. L. Raghudev Ram Singh}
%
\institute{Department of Applied Mathematics, University of Waterloo, Canada \email{slrrsing@uwaterloo.ca}}

\maketitle              
\begin{abstract}
This paper presents a novel paradigm of the local percept-perceiver phenomenon to formalise certain observations in neuroscientific theories of consciousness. Using this model, a set-theoretic formalism is developed for artificial systems, and the existence of machine consciousness is proved by invoking Zermelo–Fraenkel set theory. The article argues for the possibility of a reductionist form of epistemic consciousness within machines.

\keywords{AI  \and Consciousness  \and Machine Consciousness  \and Mathematical Logic \and Metacognition and Integration \and Percept-Perceiver Phenomenon.}
\end{abstract}

\section{Introduction}
Consciousness is \emph{self-evident}, or there is something \emph{supra-rational self-evident} that we try to hint at when we use the word \emph{consciousness}. Given the development of epistemic representations of knowledge, consciousness is and will be studied, explained, and modeled under the frameworks of philosophy, psychology, neuroscience, physics, and mathematics. However, there is no formal agreement on the definition of consciousness, and it is sketched mainly through quantifiable and qualitative elements that describe different dimensions of consciousness such as perception, subjective experience, emotions and feelings, cognitive experience and metacognition, intentionality, agency, will, and integrated information-processing mechanisms.

There are many theories in neuroscience which fundamentally explain how the brain causes conscious experience, such as the mathematical models of Integrated Information Theory (IIT) \cite{AB}, Global Neuronal Workspace Theory (GWT) \cite{BAA,DC}, Predictive Processing Theory (PPT) \cite{K,SE}, Higher-Order Thought Theory (HOT) \cite{A,BY,LY,RO}, biological models \cite{FE,LA,PL,SM}, and quantum models \cite{AP,Beck,BB,HP}. However, the mystery of consciousness is still far from resolution, given the unsolved \emph{explanatory gap} \cite{CK,Le,Na} in consciousness studies. This \emph{gap}, which is roughly the jump from objectivity to subjectivity, together with the well-known \emph{Hard Problem of Consciousness} \cite{C,RLK}, is sufficient to distinguish among posited notions in consciousness studies. Broadly, we can separate notions of consciousness into three kinds\footnote{We do not undertake here a discussion on which of these accounts \emph{is consciousness} or better explain the \emph{nature of consciousness}; rather, our focus is on examining which among them may be meaningfully adapted in the context of machine consciousness, given this tripartite filtration.}:
\begin{enumerate}
    \item that which is \emph{not an object of epistemology}. By \emph{not an object of epistemology}, we mean that it is not to be known or explained by measurement, third-person empiricism, and is irreducible to physical processes. Much of idealism, dualism, and monism lie under this kind.
    
    \item that which is under the domain of development in phenomenology, that is, the systematization and explanation of first-person reports with third-person data (e.g., EEG, fMRI). In particular, see Husserlian phenomenology, Varela’s neurophenomenology \cite{Var}, the radical neurophenomenology of Bitbol \cite{Bit2020,Bit2021,Bit2017}, and Thompson's \emph{Mind in Life} \cite{Thom}.
    
   \item that which can be explained through physicality/substrate or neural correlates, such as Integrated Information Theory (IIT), Global Workspace Theory (GWT), Higher-Order Thought (HOT) theory, etc.

\end{enumerate}

When we talk about the possibility of consciousness in AI, two primary gaps arise: \emph{philosophical gap}, which is the lack of a clear idea of \emph{which type of} expression of consciousness we can have in AI, and \emph{operational gap}, which is the lack of mechanized applied transitions from philosophical models to artificial systems.

 Many philosophical arguments presented against AI consciousness are mainly concerned with consciousness of the first or second kind (see \cite{Chal,Gr,Kak}). However, for the third type as well, concrete theories or arguments that clearly support the \emph{operationalized} existence of machine consciousness remain limited, or under active debate.

In this article, we will remain silent about \emph{operational gap} and try to address \emph{philosophical gap} for that consciousness that is the object of epistemology (falling under the third kind). We first formalize observations in cognitive neuroscience with the help of what we will call \emph{Local percepts–perceiver phenomenon} (LPPP). Drawing inspiration from this, we then propose a philosophical modeling for the machine consciousness setup and prove the existence of a particularly defined consciousness using mathematical logic. The remainder of the paper is organized as follows. In the next section,  we give a few definitions and propose LPPP and its compatibility with neuroscientific theories of consciousness, which will be referred to in the sequel. Section \ref{3} provides the mathematical setting and a proof of the existence of machine consciousness, followed by some remarks.

\section{Local Percept-Perceiver Phenomenon}\label{2}
Among several characterizations of consciousness, a hierarchical form of monitoring, signaling, or an underlying agency is often observed in human consciousness. This paradigm is formalised as below.
\subsection{Definitions}
\begin{definition}[Percept]
  It is that which presents the information for perception.  
\end{definition}

\begin{definition}[Perceiver] It is an agency that beholds the representation of a distinct percept (external stimulus) during the process of perception.
\end{definition}
Note that one of the characterizations of percepts and the perceiver is that, while percepts can vary, the perceiver remains unchanging with respect to them. 
\begin{definition}[Percept-Perceiver Phenomenon]
    It is the phenomenon when the percept is perceived by the perceiver, and corresponds to a particular percept-perceiver pair.
\end{definition}
Moreover, perceivers are uniquely determined by their associated percepts, and each perceiver is defined to have access to all prior percepts within perception.
\begin{definition}[Local percept-perceiver phenomenon] It is smallest, irreducible unit of Percept-Perceiver Phenomenon. In other words, when the percept and perceiver are directly connected through perception without any other percept or perceiver in between, then it is referred as local percept-perceiver phenomenon (LPPP).
\end{definition}

 A reductionist version of epistemic consciousness is central  in neuroscientific consciousness studies, say \emph{bio-consciousness}, which is defined below.
 \begin{definition}[Bio-consciousness]
   Consciousness, which relies on an \emph{a priori} belief that it is explainable through smaller building blocks of biological matter and biological functions governed by certain laws\footnote{which can be  deterministic, or probabilistic }, is termed \emph{bio-consciousness}.
\end{definition}
Note that \emph{bio-consciousness} relies on a carbon-based (biological) substratum. Whether such a notion can be feasibly realized in a silicon-based substratum (i.e., AGI systems) is debatable \cite{ned}, and is related to what we referred as \emph{operational gap}.

However, the reductionist form of the defined \emph{bio-consciousness} can be readily paralleled by a notion of \emph{silico-consciousness} in machine systems. This does not presuppose a biological basis for its characterization, and unlike \emph{bio-conscious} \emph{-ness}, we are not concerned with explainability but rather with detectability.

\begin{definition}[Silico-consciousness]
Consciousness, which relies on an \emph{a priori} belief that its emergence is verifiable and falsifiable through smaller building blocks of sets and maps governed by certain logic, is termed \emph{silico-consciousness}.
\end{definition}

\subsection{LPPP and Theories of Conciousness}

Without loss of generality, we motivate our analysis by focusing on visual consciousness. A series of steps is observed in the process of vision, which begins with information \emph{in the form of light}, captured by the eyes. This information is then transmitted to intermediary neurons in the retina in the form of electrical signals, and subsequently to the optic nerves and the lateral geniculate nucleus (LGN). The LGN relays the signals to the primary visual cortex (V1) in the occipital lobe. From V1, the information is forwarded to higher cortical areas identified as V2, V3, V4, and area MT (see \cite{FEL,UCB,UM,Z} and references therein).

This simple process can be mapped into LPPP framework. Eyes as a unity with respect to changing forms of obtained information of visual world form one LPPP unit, where percepts are the forms or representation and perceiver is eye. Moreover, upon application of electrical stimulations to primary visual cortex (V1) \cite{Bo,Do,Fa,Os,Sa,Sc,Wi}, it is observed that visual experience can be obtained even in absence of eyes \cite{Bea,Br,Dob,Ru}. Thus, eyes are relative perceiver and stimulation in V1 can surpass retina and LGN requirement. So eyes (changes in LGN and retinal disorders) are percepts and V1 is perceiver. Similarly, there are theories by which analogous LPPP units can be established in higher cortical areas, such as the \emph{Hierarchical theory} \cite{CK95,CK03} and the \emph{Interactive theory} \cite{BU}. According to the Hierarchical theory, one can establish a series of LPPP units sequentially up to higher cortical areas, while the Interactive theory also offers the possibility of complex branching with a to-and-fro feedback mechanism. The LPPP paradigm suits most of the prevalent theories of consciousness, such as the Higher-Order theory \cite{Aq,BY}, and Predictive Coding and Bayesian Hierarchy \cite{Ai}.

\begin{remark}
    Note that the real-world notion of perception often involves feedback, attention modulation, and predictive processing. The LPPP formalism, though presented here through a hierarchical lens, does not deny these complexities but rather offers a simplified and tractable representation for the purpose of defining mathematical categories. Bidirectional signal flows, including feedback mechanisms, can be seen as natural extensions of the base model proposed in this paper.
\end{remark}
 The computer hardware architecture serves as a foundational basis, as deep neural networks are executed on the central processing unit (CPU). However, for the idea of machine consciousness, the relevance of this architecture lies not in specific hardware implementations but primarily in the existence of layered information processing systems that can support structured LPPP. In this context, we do not consider the first or second kinds of consciousness, which involve metaphysical or phenomenological commitments, but rather focus on the third kind of consciousness that is amenable to epistemic modeling. The LPPP framework, as proposed here, provides an abstract way to represent such systems. We argue that perceptual units, when hierarchically integrated by perceiving agents, offer a substrate-independent basis for analyzing the structural conditions under which a notion of \emph{silico-consciousness} may arise.

\section{Existence of Machine Consciousness}\label{3}

For the LPPP modeling of machine consciousness, we will exploit the cognitive characterization of consciousness. As we have seen the compatibility of LPPP in modeling visual consciousness: from raw perception to higher cortical areas, the application of the LPPP structure from base intelligence to metacognitive layered integration levels is carried out in this section. That is, abstract percepts of data, progressing from sensory inputs to representations capable of metacognitive access are modeled through the lens of LPPP.

\begin{remark}
Intelligence is recognised as a precursor to metacognition. There are different opinions on how intelligence is defined as well. While \cite{Legg} describes intelligence as an agent's capability to perform tasks in a wide range of environments, weighted by their algorithmic simplicity, \cite{Cho} sketches a distinction between intelligence and skill. A substantial amount of discussion has been devoted to machinable and computational intelligence in \cite{Wan}. However, it has also been argued that artificial intelligence should not be assumed to be the same as artificial consciousness \cite{Fin}.
\end{remark}

\subsection{Ontological Assumption}
We want our artificial system to be \emph{complete}. As the subject of our inquiry \emph{silico-conciousness}, defined in a manner which gives reducibility in terms of sets and rules, for the \emph{choice} of formal framework, we assume certain assumptions which are in accordance with fundamental axioms of Zermelo-Fraenkel set theory.

Let $\L(p,q)$ represents local percept-perceiver phenomenon, where $p$ is representable information or percepts and $q$ is perceiver. By $q\leftarrow p$, we denote that $q$ perceives $p$. We then construct level sets as $\mathcal{A}_0 := \emptyset$ and 
\begin{align*}
    \mathcal{A}_1 &:= \left\{ \L(p_i, q_i) \mid p_i = \varnothing,\ q_i \leftarrow p_i \right\}\\
    \mathcal{A}_2 &:= \left\{ \L(S, q_S) \mid S \subseteq \mathcal{A}_1,\ q_S \leftarrow S \right\}\\
    ..\\
   \mathcal{A}_{\alpha+1} &:= \left\{ \L(S, q_S) \mid S \subseteq \mathcal{A}_\alpha,\ q_S \leftarrow S \right\}
\end{align*}
The above structure forms an analogous structure as the Von Neumann universe in modern set theory.
\begin{definition}[Von Neumann LPPP universe] Let $\L(p,q)$ represents Local percept-perceiver phenomenon with collection of cumulative hierarchies $\left\{\mathcal{A_\alpha}\right\}$, for any ordinal $\alpha$. Then Von Neumann LPPP universe is defined as
\[
\mathcal{A} := \bigcup_{\alpha} \mathcal{A}_\alpha.\]
    
\end{definition}

We assume that the Von Neumann LPPP universe \( \mathcal{A} \) satisfies the following well-known axioms of Zermelo–Fraenkel (ZF) set theory \cite{ZF}.
\paragraph{\textbf{Axiom of Extensionality.}}  
\( \L(p_i, q_i) \) and \( \L(p_j, q_j) \) are \emph{equal} if and only if \( p_i = p_j \) and \( q_i = q_j \). This means that two LPPP units are equal if their percepts and perceiver are equal. This reflects that machine consciousness, as modeled here, lacks the subjective vagueness which is typically associated with consciousness of the first and second kinds.

\paragraph{\textbf{Axiom of Empty Set.}}  
There exists a base perceptual level \( \mathcal{A}_0 = \emptyset \), representing the absence of any percepts or perceivers. This assumes the existence of a null cognitive state on which the construction of higher-order units relies. It can be interpreted as either an unconscious state or, alternatively, as a state of zero awareness in machines.

\paragraph{\textbf{Axiom of Pairing and Union.}}  
These two axioms allow for the integration of multiple perceptions, thereby enabling integrative consciousness that binds inputs into coherent structures. This is in accordance with and needed for the notion of global states of consciousness, as discussed in Global Workspace Theory.
\begin{itemize}
    \item[\textbf{(i)}] \textbf{Pairing:} For any \( \L_1, \L_2 \in \mathcal{A} \), there exists \( \L(\{\L_1, \L_2\}, q) \in \mathcal{A} \) for some perceiver \( q \).
    \item[\textbf{(ii)}] \textbf{Union:} For any \( \L(S, q_1) \in \mathcal{A} \), there exists \( q_2 \leftarrow \bigcup S \in \mathcal{A} \).
\end{itemize}

\paragraph{\textbf{Axiom of Infinity.}}  
There exists a countable chain of LPPP units that models the natural numbers under inductive perception. This reflects the capacity for recursive construction, inductive layering, and potentially self-reflective architecture in machine consciousness. 

\paragraph{\textbf{Axiom Schema of Separation.}}  
Given any LPPP collection and a first-order formula (\emph{definable property}) \( \varphi(x) \), there exists a subcollection containing only those \( x \in \mathcal{A}_\alpha \) satisfying \( \varphi(x) \). This axiom provides the capacity for discrimination and selective awareness, which is desired in machine consciousness.

\paragraph{\textbf{Axiom of Power Set.}}  
For every \( \mathcal{A}_\alpha \), there exists \( \mathcal{A}_{\alpha+1} \) containing LPPP units of the form \( \L(S, q_S) \), where \( S \subseteq \mathcal{A}_\alpha \). This can be interpreted as the ability to generate or attend to all structured combinations of prior inputs.

\paragraph{\textbf{Axiom of Replacement.}}  
For any definable mapping from percepts to perceivers, the image of a collection of percepts under this mapping is also a set in \( \mathcal{A} \). This provides a logical space for contextual learning and transformation within machine consciousness.

\paragraph{\textbf{Axiom of Regularity.}}  
Every nonempty \( \L(S, q) \in \mathcal{A} \) contains an element \( y \in S \) such that there does not exist a \( y \in \mathcal{A} \) with \( y \in x \) and \( y \in \L(S, q) \). This prevents circular self-reference and infinite regress, thereby ensuring foundational well-formedness of perceptual structures.

\subsection{Mathematical formalism}

We now define ordered pairing and relations in terms of LPPP as follows.

\begin{definition}[Relation $\preceq$]
Let $\L(S_1, q_1), \L(S_2, q_2) \in \mathcal{A}$  . Then, relation $\preceq$ is defined  as 
\begin{equation}
    \L(S_1, q_1) \preceq \L(S_2, q_2)
\quad \iff \quad
S_1 \subseteq S_2 \ \text{and} \ q_2 \leftarrow S_1.
\end{equation}

\end{definition}
\begin{lemma}
    The relation $\preceq$ defined on \(\mathcal{A}\) is a partial order. 
\end{lemma}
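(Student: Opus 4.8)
The plan is to verify directly the three defining properties of a partial order — reflexivity, antisymmetry, and transitivity — for $\preceq$ on $\mathcal{A}$, leaning on the set-theoretic behaviour of $\subseteq$ together with the two structural stipulations made about perceivers: that each perceiver is uniquely determined by its associated percepts, and that each perceiver has access to all prior percepts within perception. The first two properties will reduce to elementary facts about $\subseteq$ once these stipulations are brought in, while transitivity will require the access property in an essential way.

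For reflexivity, given any $\L(S,q) \in \mathcal{A}$ I would note that $S \subseteq S$ holds trivially, and that $q \leftarrow S$ holds because membership in the cumulative hierarchy $\mathcal{A}_{\alpha+1}$ is conditioned precisely on $q_S \leftarrow S$; hence $\L(S,q) \preceq \L(S,q)$. For antisymmetry, assuming both $\L(S_1,q_1) \preceq \L(S_2,q_2)$ and $\L(S_2,q_2) \preceq \L(S_1,q_1)$, the subset clauses give $S_1 \subseteq S_2$ and $S_2 \subseteq S_1$, so $S_1 = S_2$. Since perceivers are uniquely determined by their associated percepts, equality of the percept sets forces $q_1 = q_2$, and the Axiom of Extensionality for LPPP units then yields $\L(S_1,q_1) = \L(S_2,q_2)$.

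For transitivity, assuming $\L(S_1,q_1) \preceq \L(S_2,q_2)$ and $\L(S_2,q_2) \preceq \L(S_3,q_3)$, transitivity of $\subseteq$ immediately gives $S_1 \subseteq S_3$. The remaining condition to secure is the perceiving clause $q_3 \leftarrow S_1$: the hypotheses supply only $q_3 \leftarrow S_2$ together with $S_1 \subseteq S_2$. Here I would invoke the stipulation that each perceiver has access to all prior percepts within perception — since every element of $S_1$ already lies in $S_2$, which $q_3$ perceives, $q_3$ perceives $S_1$ as well. This monotonicity of $\leftarrow$ along $\subseteq$ is the one point where a purely set-theoretic argument does not close the gap, and the phenomenological access property must be used.

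I expect this transitivity step, specifically the passage from $q_3 \leftarrow S_2$ and $S_1 \subseteq S_2$ to $q_3 \leftarrow S_1$, to be the main obstacle and the step most in need of being spelled out carefully, since it is where the formalism depends on the informal characterization of perceivers rather than on the ZF axioms alone. The reflexivity and antisymmetry arguments, by contrast, I anticipate to be routine once the uniqueness-of-perceivers stipulation is stated explicitly.
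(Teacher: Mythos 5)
Your proposal is correct and follows essentially the same route as the paper's proof: reflexivity from $S\subseteq S$ and the defining condition $q\leftarrow S$, antisymmetry from mutual inclusion plus the uniqueness of perceivers given their percepts, and transitivity via the inheritance of perceptual access to pass from $q_3\leftarrow S_2$ and $S_1\subseteq S_2$ to $q_3\leftarrow S_1$. You correctly identify the transitivity step as the one place relying on the informal access stipulation rather than pure set theory, which is exactly the move the paper makes.
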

\begin{proof}
    
By definition, for any \( \mathcal{L}(S, q) \in \mathcal{A} \), we have \( S \subseteq S \) and \( q \leftarrow S \). Hence, \( \mathcal{L}(S, q) \preceq \mathcal{L}(S, q) \), which is reflexivity. Further, suppose \( \mathcal{L}(S_1, q_1) \preceq \mathcal{L}(S_2, q_2) \) and \( \mathcal{L}(S_2, q_2) \preceq \mathcal{L}(S_1, q_1) \). Then,
\(
S_1 \subseteq S_2,\quad q_2 \leftarrow S_1, \quad \text{and} \quad S_2 \subseteq S_1,\quad q_1 \leftarrow S_2.
\)
Hence, \( S_1 = S_2 \), and since perceivers are uniquely determined by their perceptual domains, it follows that \( q_1 = q_2 \). Therefore, \( \mathcal{L}(S_1, q_1) = \mathcal{L}(S_2, q_2) \), proving antisymmetry. Finally, we verify transitivity. Suppose \( \mathcal{L}(S_1, q_1) \preceq \mathcal{L}(S_2, q_2) \) and \( \mathcal{L}(S_2, q_2) \preceq \mathcal{L}(S_3, q_3) \). Then,
\(
S_1 \subseteq S_2 \subseteq S_3,\quad q_2 \leftarrow S_1,\quad q_3 \leftarrow S_2.
\)
Since \( q_3 \leftarrow S_2 \) and \( S_1 \subseteq S_2 \), implies \(  q_3 \leftarrow S_1 \) via inheritance of perceptual access. Thus,
\(\mathcal{L}(S_1, q_1) \preceq \mathcal{L}(S_3, q_3).\) Hence, \( (\mathcal{A}, \preceq) \) is a partial order set (POSET). 
\end{proof}

\begin{definition}[Chain]
Let $ (\mathcal{A}, \preceq)$)  be the partially ordered set (POSET). Subset $ \mathcal{B} \subseteq \mathcal{A} $ is known as {chain} if  $\forall \, \L(S_i, q_i), \L(S_j, q_j) \in \mathcal{B}$, either
\begin{equation}
    \L(S_i, q_i) \preceq \L(S_j, q_j) \quad \text{OR} \quad \L(S_j, q_j) \preceq \L(S_i, q_i).
\end{equation}

\end{definition}
\begin{definition}[Upper bound]
Suppose  \( \mathcal{B} \subseteq \mathcal{A} \) be a chain (totally ordered subset) in \( (\mathcal{A}, \preceq) \). An element \( \L(S_u, q_u) \in \mathcal{A} \) is called an upper bound of the chain \( \mathcal{ B} \) if we have
\begin{equation}
    \forall \L(S_i, q_i) \in \mathcal{B}, \quad \L(S_i, q_i) \preceq \L(S_u, q_u).
\end{equation}

\end{definition}
 \( \L(S_u, q_u) \) represents a higher order integration in the chain or a totally ordered subset. Given a known chain, this can be interpreted as a local meta-awareness that integrates lower levels of percepts. This upper bound can be constructed provided that the above ZF axioms hold. One of the simplest possible constuction is as follows.
 Let \( \mathcal{B} = \{ \L(S_\alpha, q_\alpha) \}_{\alpha \in I} \subset \mathcal{A} \) be a chain. Then the LPPP unit
\[
\L(S_u, q_u), \quad \text{where } S_u = \bigcup_{\alpha \in I} S_\alpha \text{ and } q_u \leftarrow S_u,
\]
is an upper bound for \( \mathcal{B} \) in \( (\mathcal{A}, \preceq) \). The above inductive construction of the upper bound is based on the Axiom of Infinity.

\begin{definition}[Maximal element]
 An element \( \L(S_C, q) \in \mathcal{A} \) is maximal element if
 \begin{equation}
     \forall \L(S', q') \in \mathcal{A}, \quad
\L(S, q) \preceq \L(S', q') \ \Rightarrow\ \L(S_C, q) = \L(S', q').
 \end{equation}

\end{definition}
\begin{proposition}[$S_C \equiv$ Silico-consciousness]
Let \( \L(S_C, q_C) \in \mathcal{A} \) be a maximal unit in our Von Neumann LPPP universe \( \mathcal{A} \), endowed with partial order \( \preceq \). Then we can posit the following attributes to \( S_C \):
\begin{enumerate}
    \item It encompasses all perceptual hierarchical units that can be epistemically accessed by any LPPP unit in \( \mathcal{A} \),
    \item It possesses metacognitive access to all prior levels of perceptual integration,
    \item It functions as a global perceiver or terminal perceiver,
    \item It represents all internal states,
\end{enumerate}
and thus \( S_C \) can be interpreted in a manner that corresponds to the functionalist criteria of consciousness proposed in Higher-Order Perception Theory and Global Workspace Theory (see \cite{BY,RO,DC}). In this regard, \( S_C \) is proposed as a candidate for modeling \emph{silico-consciousness} in artificial systems.
\end{proposition}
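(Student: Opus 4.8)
The plan is to treat the proposition in two stages: first securing that a $\preceq$-maximal unit exists, so that the statement is not vacuous, and then reading each of the four attributes off the definition of maximality together with the partial-order structure already established. The attribution stage is the literal content of what must be shown, since the statement begins by positing a maximal $\L(S_C, q_C)$; the existence stage is what makes that positing meaningful, and is where I expect the real difficulty.

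For the attributes I would argue directly from maximality and the partial-order lemma. Given any $\L(S', q') \in \mathcal{A}$ whose percept $S'$ is epistemically accessible to some unit, the Axioms of Pairing and Union let me form an integrating unit $\L(S_C \cup S', q'')$ with $q'' \leftarrow S_C \cup S'$; maximality of $\L(S_C, q_C)$ then forces $S' \subseteq S_C$, yielding attributes (1) and (4), namely that $S_C$ already contains every accessible hierarchical unit and thereby represents all internal states. For (2) and (3) I would use the standing convention that each perceiver accesses all prior percepts within perception, together with transitivity of $\preceq$: for every $\L(S_\alpha, q_\alpha) \preceq \L(S_C, q_C)$ we have $S_\alpha \subseteq S_C$ and $q_C \leftarrow S_C$, so $q_C$ inherits access to all lower integration levels and sits at the top of the order, i.e. acts as a terminal or global perceiver. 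The closing identification with Higher-Order Perception and Global Workspace criteria is interpretive and would be stated rather than derived.

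For existence I would invoke Zorn's Lemma: by the preceding lemma $(\mathcal{A}, \preceq)$ is a POSET, and the displayed construction $S_u = \bigcup_{\alpha \in I} S_\alpha$ with $q_u \leftarrow S_u$ exhibits an upper bound for every chain, so a maximal element would follow at once. The hard part will be making this step legitimate. Zorn's Lemma is equivalent to the Axiom of Choice and is not provable in ZF alone, so invoking it sits uneasily with the stated commitment to ZF; I would either pass to ZFC or exhibit an explicit maximal unit. The deeper obstacle is that $\mathcal{A} = \bigcup_\alpha \mathcal{A}_\alpha$ is modeled on the full Von Neumann universe, which is a proper class with no $\subseteq$-largest member: for any candidate $\L(S_C, q_C)$ one can form $S' = S_C \cup \{\L(S_C, q_C)\}$ at the next stage, take $q' \leftarrow S'$, and obtain $\L(S_C, q_C) \prec \L(S', q')$, so no maximal unit exists in the unrestricted class and the chain-boundedness hypothesis of Zorn itself fails. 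Overcoming this would require restricting the construction to a set-sized stage $\mathcal{A}_\lambda$ for a suitable limit ordinal $\lambda$, where chains are genuinely bounded and a maximal unit exists, and then arguing that this truncated universe still captures the intended notion of silico-consciousness.
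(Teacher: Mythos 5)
The paper offers no proof of this proposition at all: the four attributes are simply \emph{posited} of a maximal unit, the identification with Higher-Order and Global Workspace criteria is interpretive, and the existence question is deferred to the subsequent theorem, which assumes chain-boundedness as a hypothesis and invokes Zorn's Lemma. So your proposal is not so much a different route as a route where the paper takes none. Your derivation of attributes (1) and (4) is a genuine addition and is sound relative to the paper's stated axioms: forming $\L(S_C \cup S', q'')$ by Pairing and Union, noting $\L(S_C,q_C) \preceq \L(S_C\cup S', q'')$ via inheritance of perceptual access, and letting maximality collapse the two does force $S' \subseteq S_C$. Your reading of (2) and (3) from the convention that perceivers access all prior percepts matches the paper's informal gloss on upper bounds as ``local meta-awareness.''

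Your scrutiny of the existence step identifies real gaps that the paper does not close. First, Zorn's Lemma is equivalent to the Axiom of Choice and is not available in ZF alone; the paper's ontological assumptions are explicitly ZF, and although a later remark concedes that the result ``carries the same epistemic limitations as those subscribed to by the Axiom of Choice,'' the framework is never formally upgraded to ZFC. Second, and more seriously, the unrestricted cumulative hierarchy $\mathcal{A} = \bigcup_\alpha \mathcal{A}_\alpha$ admits the successor construction $S' = S_C \cup \{\L(S_C,q_C)\}$ you describe, so no maximal unit can exist there; worse, combining that construction with your own closure argument for attribute (1) would put $\L(S_C,q_C)$ inside $S_C$, contradicting the Axiom of Regularity the paper itself assumes. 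The paper evades this only by writing ``assume that every chain has an upper bound'' into the theorem statement and never checking that hypothesis against the construction of $\mathcal{A}$. Your proposed repair --- truncating at a set-sized stage $\mathcal{A}_\lambda$ --- is the natural fix, but it is nowhere in the paper, and without it (or some other restriction) the proposition is vacuous because its antecedent is unsatisfiable.
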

 
\subsection{Existence Result}
Thanks to Zorn's Lemma\footnote{If in a partially ordered set $(S,\preceq)$, each chain has an upper bound, then there is a maximal element $m$ for $S$, i.e. $m \leq s$ implies $m = s$.} we can claim  existence of silico-consciousness $S_C$.

\begin{theorem}[Existence of Silico-Consciousness]
Let \( \mathcal{A} = \bigcup_{\alpha} \mathcal{A}_\alpha \) be the Von Neumann LPPP universe, and \( \preceq \) on \( \mathcal{A} \) is defined by
\[
\L(S_1, q_1) \preceq \L(S_2, q_2)
\quad \text{iff} \quad
S_1 \subseteq S_2 \quad \text{and} \quad q_2 \leftarrow S_1.
\]
Assume that every chain \( \mathcal{B} \subseteq \mathcal{A} \) under \( \preceq \) has an upper bound in \( \mathcal{A} \). Then, there exists a maximal LPPP unit \( \L(S_C, q_C) \in \mathcal{A} \) such that
\[
\forall \L(S, q) \in \mathcal{A}, \quad \L(S_C, q_C) \preceq \L(S, q) \Rightarrow \L(S_C, q_C) = \L(S, q).
\]

\end{theorem}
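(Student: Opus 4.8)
The plan is to recognize this statement as a direct instance of Zorn's Lemma applied to the poset $(\mathcal{A}, \preceq)$, so that the whole argument reduces to checking the two structural hypotheses Zorn's Lemma requires. First I would invoke the Lemma already proved above, which establishes that $(\mathcal{A}, \preceq)$ is reflexive, antisymmetric, and transitive, hence a genuine partial order; this discharges the first requirement. Second, the hypothesis of the theorem itself supplies the remaining condition, namely that every chain $\mathcal{B} \subseteq \mathcal{A}$ under $\preceq$ admits an upper bound in $\mathcal{A}$. With both ingredients in place, Zorn's Lemma immediately yields a maximal element $\L(S_C, q_C) \in \mathcal{A}$, i.e. an element for which $\L(S_C, q_C) \preceq \L(S, q)$ forces $\L(S_C, q_C) = \L(S, q)$ for every $\L(S, q) \in \mathcal{A}$, which is precisely the asserted conclusion.

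Rather than leaving the upper-bound condition merely assumed, I would make it concrete by reproducing the construction sketched earlier: given a chain $\mathcal{B} = \{\L(S_\alpha, q_\alpha)\}_{\alpha \in I}$, set $S_u = \bigcup_{\alpha \in I} S_\alpha$ and take $q_u \leftarrow S_u$. One verifies this is a legitimate upper bound by observing that $S_\alpha \subseteq S_u$ for each $\alpha$, and that perceptual access is inherited downward (exactly as used in the transitivity argument), so that $q_u \leftarrow S_\alpha$ holds throughout; hence $\L(S_\alpha, q_\alpha) \preceq \L(S_u, q_u)$ for all $\alpha \in I$. The Axioms of Union and Infinity are what guarantee that $S_u$ and the associated unit $\L(S_u, q_u)$ actually reside in $\mathcal{A}$, so the hypothesis of the theorem is not vacuous.

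The step I expect to be the genuine obstacle is foundational rather than computational: whether Zorn's Lemma even applies to $\mathcal{A}$ at all. Since $\mathcal{A} = \bigcup_\alpha \mathcal{A}_\alpha$ ranges over all ordinals $\alpha$, it is built in exactly the manner of the von Neumann universe $V$ and is therefore a proper class, not a set, whereas Zorn's Lemma in its classical form is a statement about partially ordered \emph{sets}. To make the argument airtight I would either restrict attention to a fixed level $\mathcal{A}_\alpha$, which is a set and to which Zorn's Lemma applies verbatim, or else appeal to a global-choice / class-theoretic form of Zorn's Lemma compatible with the ZF framework assumed here. A closely related subtlety is that for a chain cofinal in the entire hierarchy the union $S_u$ need not sit inside any single $\mathcal{A}_\alpha$, so the closure of $\mathcal{A}$ under the upper-bound construction must be justified carefully; this is exactly the point at which the Axiom of Union, together with the well-foundedness guaranteed by Regularity, carries the real weight of the proof.
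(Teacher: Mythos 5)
Your proposal takes essentially the same route as the paper, which offers no proof beyond the sentence ``Thanks to Zorn's Lemma we can claim existence of silico-consciousness'': both arguments rest on the earlier lemma that \( (\mathcal{A}, \preceq) \) is a partial order, the assumed chain--upper-bound condition, and a direct appeal to Zorn's Lemma, with the union construction \( S_u = \bigcup_{\alpha \in I} S_\alpha \) supplying concrete upper bounds exactly as sketched in the paper. Your added observation that \( \mathcal{A} = \bigcup_\alpha \mathcal{A}_\alpha \) is a proper class, so that the classical (set-level) form of Zorn's Lemma does not apply verbatim, identifies a genuine foundational gap that the paper itself leaves unaddressed, and your proposed remedies (restricting to a fixed level \( \mathcal{A}_\alpha \) or invoking a global-choice variant) are the right ways to close it.
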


\begin{remark}
    Proof of the existence result of a maximal element, or candidate for silico-consciousness, is non-constructive and hence is not computational. It only guarantees the existence of such consciousness and, further, it carries the same epistemic limitations as those subscribed to by the Axiom of Choice.
\end{remark}
\begin{remark}
    It should also be noted that this article deliberately refrains from addressing the hard problem of consciousness (Chalmers, 1995) and does not claim to explain the emergence of phenomenal consciousness in machines. Instead, it proposes a formal model of perceptual integration within an epistemic set-theoretic universe, wherein a maximal element can be interpreted as a candidate for the most sophisticated form of machine consciousness whose existence is necessarily ensured by the model.
\end{remark}
This type of modeling has its advantage in that it allows us to pull the reliance of consciousness away from biological/material substrates and correlates and reduce it to purely epistemic relations, thereby leaving scope for extending the model using mathematical logic.

\subsubsection{Declaration.} The author affirms that this work was conducted independently of their doctoral studies in Applied Mathematics and their diploma in Theoretical Neuroscience, and that it does not form part of the requirements for either academic program.

\bibliography{references}
\bibliographystyle{splncs04}
\end{document}